\newtheorem{theorem}{Theorem}
\newtheorem{corollary}[theorem]{Corollary}
\newtheorem{observation}[theorem]{Observation}
\theoremstyle{remark}
\theoremstyle{definition}
\newcommand{\N}{\mathbb{N}}
\newcommand{\Z}{\mathbb{Z}}
\definecolor{darkgreen}{rgb}{0,.5,0}
\newcounter{sideremark}
\title{VEST is $W[2]$-hard
\thanks{
The author acknowledges support by the project ``Grant Schemes at CU'' (reg. no. CZ.02.2.69/0.0/0.0/19\_073/0016935)).
}
}
\author{Michael Skotnica}
\affil{Department of Applied Mathematics, Faculty of Mathematics and Physics,
Charles~University, Prague, Czech Republic}
\date{}
\begin{document}

\maketitle

\begin{abstract}
In this short note, we show that the problem of VEST is $W[2]$-hard for parameter $k$.
This strengthens a result of Matou\v{s}ek, who showed $W[1]$-hardness of that problem.
The consequence of this result is that computing the $k$-th homotopy group of a $d$-dimensional space
for $d > 3$ is $W[2]$-hard for parameter $k$.
\end{abstract}

\section{Introduction}
The homotopy groups $\pi_k$, for $k = 1,2,\ldots$ are important invariants of topological spaces.
The most intuitive of them is the group $\pi_1$ which is often called fundamental group.

Many topological spaces can be described by finite structures, e.g. by abstract simplicial complexes.
Such structure can be used as an input for a computer and therefore,
it is natural to ask how hard is to compute these homotopy groups
of a given topological space represented by an abstract simplicial complex.

Novikov in 1955 (see \cite{Novikov55}) and independently Boone in 1959 (see \cite{Boone59})
showed undecidability of the word problem for groups.
Their result also implies undecidability of computing the fundamental group.
(Even determining whether the fundamental group of a given topological space is trivial
is undecidable.)

On the other hand, it is known that for greater $k$, the corresponding
homotopy group $\pi_k$ is a finitely generated abelian group which is always isomorphic to a group of the form
\begin{align*}
  \Z^n \oplus \Z_{p_1} \oplus \Z_{p_2} \oplus \cdots \oplus\Z_{p_m}
\end{align*}
where $p_1, \ldots, p_m$ are powers of prime numbers.%
\footnote{Note that $\Z^n$ is a direct sum of $n$ coppies of $\Z$ while $\Z_{p_i}$ is a finite cyclic group of order $p_i$.}
An algorithm for computing $\pi_k$, where $k > 1$, was first introduced by Brown in 1957 (see \cite{Brown57}).

In 1989, Annick (see \cite{Anick89}) proved that computing rank of $\pi_k$, that is the number of direct summands isomorphic to $\Z$ (represented by $n$ in the expression above) is $\#P$-hard for 4-dimensional 1-connected spaces.
Another computational problem called VEST, which we define below, was used in Annick's proof as an intermediate step.
Briefly said, $\#P$-hardness of the problem of VEST implies $\#P$-hardnes of computing rank of $\pi_k$.

\paragraph{Vector evaluated after a sequence of transformations (VEST).}
The input of this problem defined by Anick in \cite{Anick89}
is a vector $\textbf{v} \in \mathbb{Q}^d$, a list of $(T_1,\ldots, T_m)$
of rational $d \times d$ matrices and a rational matrix $S \in \mathbb{Q}^{h \times d}$
for $d,m,h \in \N$.

For an instance of a VEST let \emph{$M$-sequence} be a sequence of integers $M_1, M_2, M_3, \ldots$, where
\begin{align*}
  M_k := |\{(i_1,\ldots, i_k); ST_{i_k}\cdots T_{i_1} \mathbf{v} = \mathbf{0}\}|.
\end{align*}
Given an instance of a VEST and $k \in \N$, the goal is to compute $M_k$.
Note, that instead of rational setting we can assume integral setting.

From an instance of a VEST, it is possible to construct a corresponding algebraic structure called \emph{$123H$-algebra}
in polynomial time whose \emph{Tor-sequence} is equal to the $M$-squence of the original instance of a VEST.
This is stated in \cite[Theorem 3.4]{Anick89} and it follows from \cite[Theorem 1.3]{Anick85} and \cite[Theorem 7.6]{Anick87}.

Given a presentation of a $123H$-algebra, one can construct a corresponding 4-dimensional simplicial complex in polynomial time
whose sequence of ranks $(\textup{rk }\pi_2, \textup{rk }\pi_3, \ldots)$ is related to the Tor-sequence of the $123H$-algebra.
In particular, it is possible to compute that Tor-sequence from the sequence of ranks using an $FPT$ algorithm. (Which is defined in the next paragraph). This follows from \cite{Roos79} and \cite{Cadek14_2}. To sum up, a hardness of computing $M_k$ of a VEST implies a hardness of computing $\pi_k$.

\paragraph{Parameterized complexity and $W$ hierarchy}
It is also possible to look at the problem of computing $\pi_k$ from the viewpoint of parameterized complexity which
classifies decision computational problems with respect to a given parameter(s).
For instance, one can ask if there exists an independent set of size $k$ in a given graph,
where $k$ is the parameter. 

In our case, the number $k$ of the homotopy group $\pi_k$ plays the role of such parameter.
Since we assume only decision problems, we only ask whether the rank of $\pi_k(X)$ of a space $X$ is nonzero (or equal to a particular number).
In 2014 \v{C}adek et al. (see \cite{Cadek14}) proved that this problem is in $XP$ in the parameter $k$.
In other words, there is an algorithm solving this problem in time $cn^{f(k)}$, where $c$ is a constant, $n$ is the size of input
and $f(k)$ is a computable function of the parameter $k$.

A lower bound for the complexity from the parameterized viewpoint was obtained by Matou\v{s}ek in 2013 (see \cite{Matousek13}).
He proved that computing $M_k$ of a VEST is $W[1]$-hard. This also implies $W[1]$-hardness for the original problem of computing higher homotopy groups $\pi_k$ (for 4-dimensional 1-connected spaces) for parameter $k$.
Matou\v{s}ek's proof also works as a proof for $\#P$-hardness and it is shorter and much more easier than the original proof of Annick in \cite{Anick89}.

The class $W[1]$ is a member of the following $W$ hierarchy, which we briefly define.

\begin{align*}
  FPT \subseteq W[1] \subseteq W[2] \subseteq \cdots \subseteq W[P] \subseteq XP
\end{align*}

We have already defined the class $XP$ above. The class $FPT$ consists of decision problems solvable in time $f(k)n^{O(1)}$,
where $f(k)$ is a computable function of the parameter $k$  and $n$ is the size of input.
It is only known that $FPT \subsetneq XP$ (see \cite{Flum04}). The class $W[1]$ then consists of all problems
which can be reduced by an $FPT$ algorithm to a booliean circuit of a constant depth with AND, OR  and NOT gates such that there is at most 1 gate of higher input size than 2 on each path from the input gate to the final output gate (this number of larger gates is called \emph{weft}) such that the parameter $k$ from the original problem is translated to setting $g(k)$ input gates to TRUE. See Figure~\ref{Fig:clique}.
It is strongly believed that $FPT \subsetneq W[1]$. \emph{Therefore, one cannot expect existence of an algorithm solving a $W[1]$-hard problem in time $f(k)n^{c}$
where $f(k)$ is a computable function of $k$ and $c$ is a constant.}

\begin{figure}
  \centering
  \includegraphics[scale=.862]{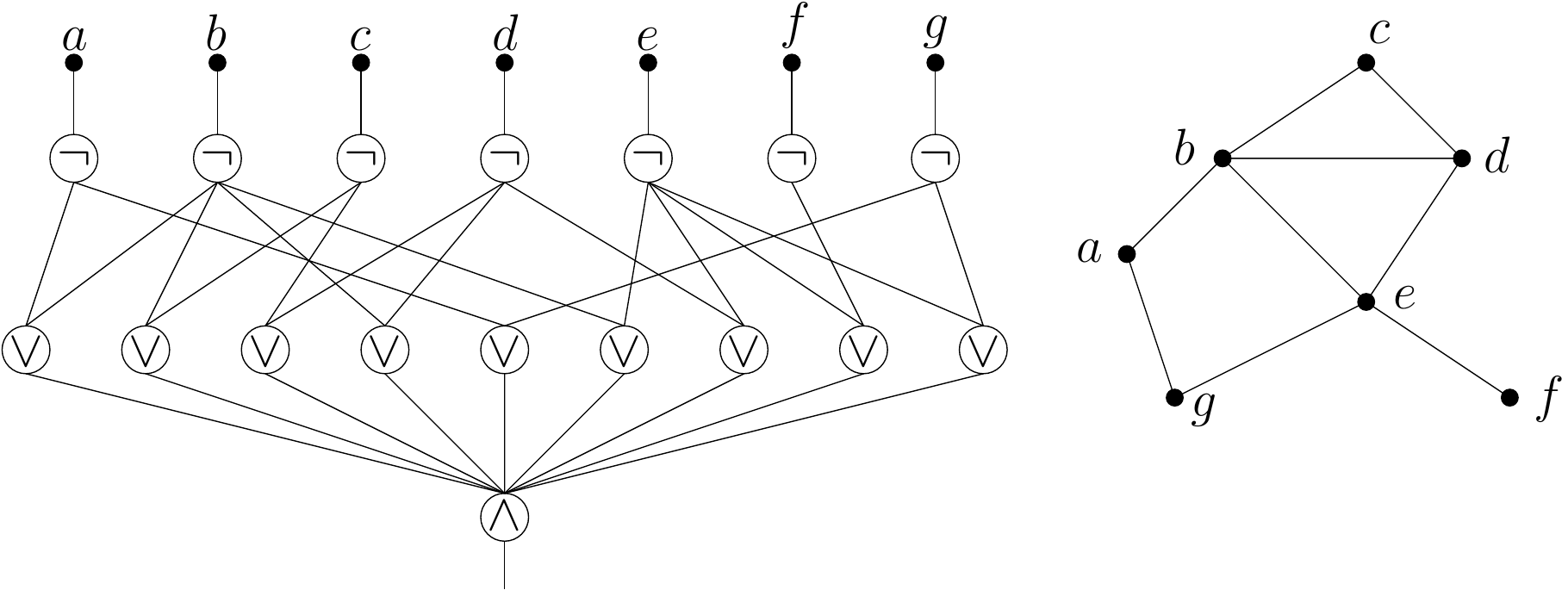}
  \caption{A boolien circuit solving the problem of existence of an independent set of size $k$ in the graph on the left.
  There is an independent set of size $k$ in the graph if and only if the boolean circuit outputs TRUE
  for an input consisting of exactly $k$ true values.
  }
  \label{Fig:clique}
\end{figure}

The class $W[i]$ consists of problems $FPT$-reducible to a boolean circuit of a constant depth and weft at most $i$.

The class $W[P]$ can be defined as a class of problems which can be solved by non-deterministic Turing machine
which can make at most $O(g(k)\log n)$ non-deterministic choices and which works in time $f(k)n^{O(1)}$.
See~\cite{Flum04}.

According to this definition, it is easy to see that the problem of VEST is in $W[P]$.

\begin{observation}
Computing $M_k$ of a VEST for parameter $k$ is in $W[P]$.
\end{observation}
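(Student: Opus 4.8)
The plan is to exhibit a nondeterministic Turing machine witnessing membership in $W[P]$ according to the definition recalled above, namely one that makes at most $O(g(k)\log n)$ nondeterministic choices and runs in time $f(k)n^{O(1)}$. Since $W[P]$ is a class of decision problems, I would first fix the decision version of VEST: given an instance and the parameter $k$, decide whether $M_k$ is nonzero, equivalently whether there exists at least one tuple $(i_1,\ldots,i_k)$ with $ST_{i_k}\cdots T_{i_1}\mathbf{v} = \mathbf{0}$. This is exactly the sort of ``rank is nonzero'' decision described earlier, and the existential phrasing makes the guess-and-verify structure transparent.

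First I would have the machine nondeterministically guess the tuple $(i_1,\ldots,i_k)$. Each index $i_j$ ranges over $\{1,\ldots,m\}$, so specifying it costs $\lceil\log_2 m\rceil$ nondeterministic bits, and since $m$ is bounded by the input size $n$ this is $O(\log n)$ bits per index. Guessing all $k$ indices therefore uses $k\cdot O(\log n) = O(k\log n)$ nondeterministic choices, which matches the required bound $O(g(k)\log n)$ with $g(k) = k$. The machine then verifies the guess deterministically: it computes $ST_{i_k}\cdots T_{i_1}\mathbf{v}$ by $k$ successive matrix-vector multiplications followed by one application of $S$, and tests whether the result equals $\mathbf{0}$, accepting iff some guessed tuple passes.

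The only point requiring care — and the step I would expect to be the main (though mild) obstacle — is controlling the bit-length of the intermediate entries, since repeated matrix-vector products could in principle blow up the encoding size and break the polynomial time bound on the verification. However, if every entry of the input has bit-length at most $L \le n$, then each multiplication increases the bit-length of the current vector by at most an additive $O(L + \log d)$ term, so after $k$ steps the entries have bit-length $O(k(L+\log d)) = O(k\,n)$. This keeps each multiplication, and hence the whole verification, polynomial in $n$ with only a linear dependence on $k$, so the running time is of the form $f(k)n^{O(1)}$. Both resource bounds of the $W[P]$ definition are thus satisfied, and the observation follows.
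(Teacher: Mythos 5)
Your proof is correct and follows essentially the same route as the paper: nondeterministically guess the $k$ indices using $O(k\log n)$ bits (valid since $m\le n$), then deterministically verify by performing the $k+1$ matrix--vector multiplications. The only difference is that you explicitly bound the bit-length growth of intermediate entries, a point the paper glosses over with ``this can be easily done in time $(k+2)n^3$''; your additive $O(L+\log d)$ per-step bound is a welcome extra precision but does not change the argument.
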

\begin{proof}
  Let $n$ be the size of the input and $m$ the number of the matrices in the collection.
  In~particular, $m \leq n$.
  
  We can guess which $k$ matrices we choose from the collection. Each matrix can be represented
  by an integer $\leq m$ which can be described by $\log m$ bits.
  Therefore, we need at most $k\log m \leq k\log n$ non-deterministic choices.
  
  Then, we need to multiply $k+1$ matrices together with 1 vector. This  can be easily done in
  time $(k+2)n^3$.
\end{proof}

In this note, we strengthen the result of Matou\v{s}ek and show that the problem of VEST is $W[2]$-hard.
Our proof is even simpler than the proof of $W[1]$-hardness.

\begin{theorem}\label{thm:w2harndess}
  Computing $M_k$ of a VEST is $W[2]$-hard for parameter $k$ .
\end{theorem}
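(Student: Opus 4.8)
The plan is to give an FPT reduction to VEST from \textsc{Dominating Set} parameterized by the solution size $k$, which is a classical $W[2]$-complete problem (Downey--Fellows). Since Theorem~\ref{thm:w2harndess} concerns \emph{computing} $M_k$, it suffices to prove that even the decision version ``is $M_k \neq 0$?'' is $W[2]$-hard. The guiding idea is that $M_k$ counts ordered $k$-tuples of matrices whose product annihilates $\mathbf{v}$, and a domination (covering) condition can be encoded by diagonal $0/1$ projections acting on the all-ones vector: multiplying by the matrices associated with a set of vertices zeroes out exactly the coordinates of the dominated vertices.

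Concretely, given a graph $G$ with vertex set $V = \{v_1, \dots, v_n\}$, I would build a VEST instance in dimension $d = n$ with $m = n$ matrices as follows. Set $\mathbf{v} = (1, \dots, 1)^{T}$, let $S = I_n$ be the identity, and for each vertex $v_i$ let $T_i$ be the diagonal matrix with a $0$ in position $j$ whenever $v_j$ lies in the closed neighborhood $N[v_i]$, and a $1$ in every other position. All entries lie in $\{0,1\}$, so this is a polynomial-time, parameter-preserving (hence FPT) reduction. Because the matrices $T_i$ are diagonal they commute and are idempotent, so for any tuple $(i_1, \dots, i_k)$ the vector $S\,T_{i_k}\cdots T_{i_1}\mathbf{v}$ has a $0$ in coordinate $j$ precisely when $v_j \in N[v_{i_1}] \cup \cdots \cup N[v_{i_k}]$, and a $1$ otherwise.

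Consequently $S\,T_{i_k}\cdots T_{i_1}\mathbf{v} = \mathbf{0}$ if and only if $\{v_{i_1}, \dots, v_{i_k}\}$ is a dominating set of $G$, so $M_k$ equals the number of ordered $k$-tuples of vertices (repetitions allowed) that dominate $G$. The decision equivalence then follows: if $G$ has a dominating set of size $t \le k$, padding it with repetitions of one of its vertices produces a dominating $k$-tuple, whence $M_k \ge 1$; conversely, the distinct vertices occurring in any $k$-tuple counted by $M_k$ already form a dominating set of size at most $k$. Thus $M_k \neq 0$ if and only if $G$ admits a dominating set of size at most $k$, which establishes the reduction.

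Since the construction is immediate, there is no deep obstacle; the only points demanding care — and hence the crux of the argument — are (i) that the VEST count genuinely ranges over ordered tuples \emph{with repetition}, which is exactly what makes the padding step valid and keeps the parameter unchanged, and (ii) invoking the source problem under the parameterization for which it is known to be $W[2]$-complete. This matches the remark that the present argument is simpler than the earlier $W[1]$-hardness proof.
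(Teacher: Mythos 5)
Your proof is correct, and it reaches the theorem by the same reduction source as the paper but with a genuinely different (and leaner) mechanism. Both arguments reduce from \textsc{Dominating Set} and encode domination by $0/1$ matrices that annihilate the coordinates of dominated vertices. The difference is how repeated indices in a tuple $(i_1,\ldots,i_k)$ are handled. The paper spends most of its effort on an anti-repetition gadget: each vertex gets three coordinates plus a global constant coordinate (dimension $3n+1$), and a small shift submatrix guarantees that a tuple is counted only if all chosen matrices are distinct; this yields the exact relation $M_k = k!\,D_k$, where $D_k$ is the number of dominating sets of size exactly $k$. You discard that gadget entirely: your matrices are diagonal, idempotent and commuting, so repetitions are allowed, and you neutralize them with the monotonicity/padding observation that $M_k \neq 0$ if and only if $G$ has a dominating set of size at most $k$. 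Since hardness of the decision question ``is $M_k \neq 0$?'' suffices for $W[2]$-hardness of computing $M_k$, your argument is complete, and it is simpler (dimension $n$, $S$ the identity, no gadget); it also shares the paper's side benefit of working in the $\Z_2$ setting, as your matrices have at most one $1$ per row. What the paper's gadget buys in exchange is the clean counting relation $M_k = k!\,D_k$, i.e.\ a reduction that is parsimonious up to the factor $k!$; this matters for $\#P$-hardness--type conclusions (in the spirit of Anick's and Matou\v{s}ek's proofs), whereas in your construction $M_k$ is a weighted sum over all dominating sets of size at most $k$ (weighted by numbers of surjections), from which the number of dominating sets is not directly recoverable. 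For the theorem as stated, both proofs are equally valid.
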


Theorem~\ref{thm:w2harndess} together with the result of Anick (see \cite{Anick89}) implies the following.

\begin{corollary}
  Computing $k$-th homotopy groups of $d$-dimensional space
  for $d > 3$ is $W[2]$-hard in the parameter $k$.
\end{corollary}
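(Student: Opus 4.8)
The plan is to read this Corollary not as a fresh hardness proof but as the endpoint of the $FPT$ reduction chain already sketched in the introduction. Theorem~\ref{thm:w2harndess} supplies $W[2]$-hardness of computing $M_k$ of a VEST, so it suffices to exhibit an $FPT$ reduction from the VEST problem to the problem of deciding, for a $d$-dimensional simplicial complex $X$ and an index $k$, whether $\textup{rk }\pi_k(X)$ equals a prescribed value. The two polynomial-time constructions recalled in the introduction provide exactly such a reduction, once one checks that together they are parameter-preserving and index-local.

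First I would assemble the chain for $d=4$. Given a VEST instance $I$ with parameter $k$, Anick's construction (\cite[Theorem 3.4]{Anick89}, resting on \cite[Theorem 1.3]{Anick85} and \cite[Theorem 7.6]{Anick87}) produces in polynomial time a presentation of a $123H$-algebra $A$ whose $\textup{Tor}$-sequence coincides with the $M$-sequence of $I$; in particular its $k$-th entry equals $M_k(I)$. The Roos--\v{C}adek construction (\cite{Roos79}, \cite{Cadek14_2}) then turns this presentation, again in polynomial time, into a $1$-connected $4$-dimensional simplicial complex $X$ whose rank sequence $(\textup{rk }\pi_2(X), \textup{rk }\pi_3(X), \ldots)$ determines the $\textup{Tor}$-sequence of $A$, with the $\textup{Tor}$-sequence moreover computable from the ranks by an $FPT$ algorithm. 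The composite $I \mapsto X$ is thus a polynomial-time construction, and the remaining work lives entirely on the parameter side.

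The crucial point I would establish is \emph{index-locality}: there is a computable function $g$ so that $\textup{Tor}_k(A)$, and hence $M_k(I)$, is recovered from the finitely many ranks $\textup{rk }\pi_j(X)$ with $j \le g(k)$, with the number of ranks consulted depending on $k$ alone and not on the input size $n$. Granting this, the map $I \mapsto (X, g(k))$ is a parameterized reduction, since the translated index $g(k)$ is a function of $k$ only. Consequently a hypothetical algorithm deciding the homotopy problem in time $f(\ell)\,n^{O(1)}$ (with $\ell$ the queried index $\le g(k)$) would, after the polynomial-time construction of $X$ and the $FPT$ read-off of $M_k(I)$ from the bounded block of ranks guaranteed by \cite{Roos79} and \cite{Cadek14_2}, decide the VEST instance in $FPT$ time; by Theorem~\ref{thm:w2harndess} this is impossible unless $FPT = W[2]$. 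When the correspondence is a single-index identity $M_k = \textup{rk }\pi_{g(k)}(X)$ the reduction is many-one, and in general it is an $FPT$ Turing reduction with boundedly many oracle calls at indices $\le g(k)$; $W[2]$-hardness transfers in either case. This settles $d=4$, and for $d>4$ the same $4$-dimensional complex is a fortiori a $d$-dimensional space (if an exact dimension is required, one may wedge on a sphere $S^{d}$, which leaves the ranks $\textup{rk }\pi_j(X)$ unchanged throughout the bounded range $j \le g(k)$).

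The main obstacle is precisely this index-locality together with the $FPT$-invertibility of the rank-to-$\textup{Tor}$ passage. One must verify that the spectral-sequence and change-of-rings relations of \cite{Roos79} and \cite{Cadek14_2} linking the bigraded $\textup{Tor}$ of $A$ to the homotopy ranks of $X$ only mix indices within a range controlled by $k$, so that the translated parameter $g(k)$ remains a function of $k$ alone and the recovery of $M_k$ uses only $f(k)$ arithmetic on those ranks. I expect the polynomiality of the two constructions to be routine once their statements are invoked; the care lies entirely in bookkeeping the degree shift, so that the reduction is a genuine parameterized reduction and the $W[2]$-hardness of Theorem~\ref{thm:w2harndess} transfers intact.
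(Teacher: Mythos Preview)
Your proposal is correct and follows essentially the same route as the paper: the Corollary is not given a separate proof there but is simply declared to follow from Theorem~\ref{thm:w2harndess} together with the VEST\,$\to$\,$123H$-algebra\,$\to$\,$4$-complex chain already summarized in the introduction (citing \cite{Anick89}, \cite{Anick85}, \cite{Anick87}, \cite{Roos79}, \cite{Cadek14_2}). Your write-up is in fact more scrupulous than the paper about the parameterized bookkeeping (index-locality via a function $g(k)$, the many-one versus Turing distinction, and the passage to $d>4$), but the underlying argument is the same.
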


\section{The proof} 
Note that the current complexity of the problem of VEST is a self-contained problem.
Our reduction will use only 0,1 matrices and the initial vector $\mathbf{v}$.
Moreover, each matrix will have at most one 1 in each line.
Therefore, such construction also shows $W[2]$-hardness of a VEST for $\Z_2$ setting.

\paragraph{$W[2]$-complete problem.}
Our reduction is from well-know problem of existence of a dominating set of size $k$
which is known to be $W[2]$-complete and which we define in this paragraph. See~\cite{Flum04}.

For a graph $G(V,E)$ and its vertex $v \in V$
let $N[v]$ denote the closed neighborhood of a vertex $v$.
That is, $N[v] := \{u \in V; \{u,v\} \in E\} \cup \{v\}$.

A \emph{dominating set} of a graph $G(V,E)$ is a set $U \subseteq V$
such that for each $v$ there is $u \in U$ such that $v \in N[u]$.

\begin{proof}[Proof of Theorem~\ref{thm:w2harndess}]
  We show an FPT reduction from the problem of existence of a dominating set of size $k$
  %, which   is known to be $W[2]$-complete problem,
  to a VEST.
  
  Let $G(V,E)$ be a graph and let $n = |V|$. We start with a description of our vector space.
  It is of dimension $3n + 1$. For each $u \in V$ we have 3 dimensions $u_1, u_2, u_3$.  
 Then there is one extra dimension $c$. In the beginning, the corresponding coordinates of the vector $\textbf{v}$ are set as follows:  $u_1 = 1$ and $u_2 = u_3 = 0$. The coordinate corresponding
 to $c$ will be set to 1 during the whole computations. The described coordinates will simulate a data structure during the computation which will correspond to matrix multiplication.
  
For each $u \in V$ we create a matrix $M_u$ as follows.
This matrix nullifies the coordinate $w_1$ for each $w \in N[u]$
which corresponds to a domination of vertices in $N[u]$ by the vertex $u$.
The matrix $M_u$ also set $u_2$ to $u_3$ and $u_3$ to 1.
This can be done by the coordinate $c$.
See Figure~\ref{fig:no_repetition}.
\begin{figure}
  \centering
  $\begin{pmatrix}
    0 & 1 & 0\\
    0 & 0 & 1\\
    0 & 0 & 1
  \end{pmatrix}$
  \caption{The submatrix realizing the procedure which assures that no matrix can repeat.
    The first line corresponds to the coordinate $u_2$, the second to $u_3$ and the third to $c$.}\label{fig:no_repetition}
\end{figure}
It is similar to the procedure from \cite{Matousek13} and it assures that each matrix can be chosen only once, but
comparing to that procedure used by Matou\v{s}ek it also works in $\Z_2$ setting;
note that in the beginning or after one multiplication by matrix $M_u$ the coordinate $u_2 = 0$
while after two or more multiplications it is $1$.

The matrix $S$ then chooses the coordinates $u_1$ and $u_2$ for each vector $u$.
The coordinate $u_1 = 1$ if and only if corresponding vertex is not dominated.
As it was discussed in the previous paragraph $u_2 = 0$
if and only if the corresponding matrix was not chosen or was chosen only once.

Therefore, $M_k = k!D_k$ where $D_k$ is the number of dominating sets of size $k$ of graph $G$.

Note that the reduction is FPT.
Indeed, we do not use the parameter $k$ during it
and it is polynomial in the size of input.
\end{proof}

\bibliographystyle{alpha}
\newcommand{\etalchar}[1]{$^{#1}$}

\end{document}